\title{\LARGE \bf
Nonlinear control for an uncertain electromagnetic actuator 
}
\author{Flavien Deschaux\thanks{F. Deschaux is with CNRS, LAAS, Univ de Toulouse, INSA, France, e-mail: fdeschau@laas.fr.}, Frederic Gouaisbaut\thanks{F. Gouaisbaut is with CNRS, LAAS, Univ de Toulouse, UPS, France, e-mail: fgouaisb@laas.fr.} and Yassine Ariba \thanks{Y. Ariba is with the school of engineering Icam, Toulouse, France, e-mail: yassine.ariba@icam.fr and with CNRS, LAAS,  Univ de Toulouse, Toulouse, France.}}
\begin{document}

\newtheorem{theorem}{Theorem}
\newtheorem{Proof}{Proof}
\newtheorem{Remark}{Remark}
\renewcommand{\bar}{\overline}

\maketitle
\thispagestyle{empty}
\pagestyle{empty}

%%%%%%%%%%%%%%%%%%%%%%%%%%%%%%%%%%%%%%%%%%%%%%%%%%%%%%%%%%%%%%%%%%%%%%%%%%%%%%%%
\begin{abstract}

This paper presents the design of a nonlinear control law  for a typical electromagnetic actuator system. Electromagnetic actuators are widely implemented in industrial applications, and especially as linear positioning system. In this work, we aim at taking into account a magnetic phenomenon that is usually neglected: flux fringing. This issue is addressed with an uncertain modeling approach. The proposed control law consists of two steps, a backstepping control regulates the mechanical part and a sliding mode approach controls the coil current and the magnetic force implicitly. An illustrative example shows the effectiveness of the presented approach.
\end{abstract}

%%%%%%%%%%%%%%%%%%%%%%%%%%%%%%%%%%%%%%%%%%%%%%%%%%%%%%%%%%%%%%%%%%%%%%%%%%%%%%%%
%%%%%%%%%%%%%%%%%%%%%%%%%%%%%%%%%%%%%%%%%%%%%%%%%%%%%%%%%%%%%%%%%%%%%%%%%%%%%%%%
%%%%%%%%%%%%%%%%%%%%%%%%%%%%%%%%%%%%%%%%%%%%%%%%%%%%%%%%%%%%%%%%%%%%%%%%%%%%%%%%
\section{Introduction}
%%%%%%%%%%%%%%%%%%%%%%%%%%%%%%%%%%%%%%%%%%%%%%%%%%%%%%%%%%%%%%%%%%%%%%%%%%%%%%%%
%%%%%%%%%%%%%%%%%%%%%%%%%%%%%%%%%%%%%%%%%%%%%%%%%%%%%%%%%%%%%%%%%%%%%%%%%%%%%%%%
%%%%%%%%%%%%%%%%%%%%%%%%%%%%%%%%%%%%%%%%%%%%%%%%%%%%%%%%%%%%%%%%%%%%%%%%%%%%%%%%
For many years Electro-Magnetic Actuators (EMA) have been developed and used in industrial environment, and especially in automotive industries \cite{ito1980electronic}. This interest can be easily explained by several factors: their small size, their simple structure, their cost/efficiency ratio and the very large range of applications.  For example, such technology is used at a micro-scale with the micro-electro-mechanical-systems (MEMS) \cite{cugat2003magnetic}. At a macro-scale, magnetic bearing systems \cite{di2007model}, electromagnetic positioning systems \cite{su2007towards} or electronic injection systems of thermal motors \cite{ito1980electronic} are all typical examples of EMA. Also, at a larger scale, this technology is under development for magnetic levitated vehicle \cite{jeong2017analysis}. For several years, the National Centre for Space Studies (CNES\footnote{The CNES, meaning Centre National d'Etudes Spatiales, is the French space agency.}) investigates innovative technologies to expand the use of electrical actuators in Ariane launchers. More specifically, the space agency has been working with CSTM, a mechanical engineering company, to replace pneumatic valves with electromagnetic actuators \cite{publi_rome_cnes}. The present study continues this work, in collaboration with the CNES and CSTM, and focuses on the control issue.

EMA are often controlled by linear control strategies such as  Proportional Derivative controller $(PD)$  \cite{siddiqui2017stabilizing},  Proportional Integral Derivative controller $(PID)$ \cite{endalecio2017position} or Linear Parameter Varying $(LPV)$ \cite{yaseen2017comparative}, Model Predictive Control $(MPC)$ \cite{di2007model} and Linear Quadratic Regulator $(LQR)$ controllers  \cite{forrai2007electromagnetic}. In \cite{siddiqui2017stabilizing} and \cite{di2007model}, a common approximation of the electromagnetic force $ F_{mag} = \frac{B^2S}{2\mu_0}$ is considered. Such simplification may be valid when the magnetic circuit is neglected and only airgap surface are considered. A linearization around a settling point leads to the use of a second order transfer function between the input current and the position of the EMA. In \cite{yaseen2017comparative} and \cite{forrai2007electromagnetic}, the expression of the magnetic force has been refined, as $ F_{mag} = N^2\frac{i^2 k_2}{(k_0 + k_1 x)^2}  $ even though the model is, eventually linearized. In this expression, $ F_{mag}$ depends explicitly on the airgap  $x$ and the actuator current $i$.  $N$ and $k_i$ are constants related to the physical structure of the system. The nature of this latter expression has led researches to design non-linear control laws.

Hence, in \cite{schwarzgruber2012nonlinear}, a backstepping approach is proposed to address the stabilization of a nonlinear model of the EMA. Furthermore, a worst-case estimation of the desired magnetic force provides conditions for the backstepping gains to avoid saturation of the magnetic force. However, notice that no explicit analytical model for the electromagnetic force has been developed. Instead, two look-up tables giving the relationships between the magnetic flux, the position and the magnetomotive force on one hand, and between the position, the magnetomotive force and the magnetic force, on the other hand, are used. The approach thus requires finite element method simulation before the control design. In \cite{mercorelli2012antisaturating}, the author uses an adaptive preaction (feedforward) to charge the coil energetically in order to compensate the spring force and a sliding mode control strategy to avoid saturation and achieve a soft landing control. A magnetic equivalent circuit of the actuator is designed in order to evaluate the magnetic force but some terms are neglected like the product between the inductance and the current time derivative $L(t)\frac{di(t)}{dt}$ used in the electrical modeling. \cite{liu2003nonlinear} combines a linear and nonlinear controls, the magnetic force is regarded as a virtual control input: a linear dynamic output feedback is used to construct the image of the magnetic force needed to stabilize the system and a backstepping method is applied to find the input voltage required. However, the proposed model is simplified and does not contain non-controlled forces like spring force or gravity. \cite{mercorelli2012antisaturating} and \cite{schwarzgruber2012nonlinear} take into account the magnetic saturation phenomenon, but in the literature the flux fringing effect has not been investigated so far.

The flux fringing is a phenomenon that occurs when the magnetic flux flows from a ferromagnetic material to the air. The magnetic flux tends to expand before being canalized again when it flows back through the material. The larger the airgap is, the larger the equivalent surface is.  The EMA principle being based on airgap control, if the stroke is significant with respect to the magnetic circuit surface, flux fringing must be taken into account. In control literature, the surface in the airgap is often considered equal to the magnetic circuit surface. 

A first objective of this paper is to consider a more comprehensive model of the actuator and to take into account the flux fringing. The analytical model is derived from a reluctance network approach. This model is then supported by finite element method simulations with COMSOL \cite{Comsol} and electrical system based simulations with PLECS \cite{PLECS}. The effects of flux fringing are then embedded into an uncertain model. Taking into account this issue leads to an uncertain magnetic force, and therefore a non-linear uncertain model. Then a backstepping control cannot compensate a nonvanishing term depending on a spring force. The same applies for the sliding mode, the non-vanishing term does not satisfy the matching condition. The proposed control law is then developed in two steps in order to stabilize the system despite the presence of uncertain parameters. Firstly, the backstepping method is used to stabilize the position and the speed of the moving part of the actuator by computing a suitable coil current signal. Secondly, a sliding mode control is used to control this latter variable. The sliding surface represents the difference between the actual current in the actuator and the desired current for the backstepping control. An illustrative example shows that the proposed approach is able to ultimately bound the EMA despite the model uncertainties.

%And to control it with nonlinear and robust methods we will develop step by step how the model was built and justify it by using COMSOL, a finite elements simulation mutliphysics tool and the software PLECS where an equivalent magnetic circuit is designed. The effects of flux fringing and magnetic saturation are included in the model through  uncertains parameters : Flux fringing and saturation are not well known features and they issue on magnetic force.

%%%%%%%%%%%%%%%%%%%%%%%%%%%%%%%%%%%%%%%%%%%%%%%%%%%%%%%%%%%%%%%%%%%%%%%%%%%%%%%%
%%%%%%%%%%%%%%%%%%%%%%%%%%%%%%%%%%%%%%%%%%%%%%%%%%%%%%%%%%%%%%%%%%%%%%%%%%%%%%%%
%%%%%%%%%%%%%%%%%%%%%%%%%%%%%%%%%%%%%%%%%%%%%%%%%%%%%%%%%%%%%%%%%%%%%%%%%%%%%%%%
\section{System description and modeling}
%%%%%%%%%%%%%%%%%%%%%%%%%%%%%%%%%%%%%%%%%%%%%%%%%%%%%%%%%%%%%%%%%%%%%%%%%%%%%%%%
%%%%%%%%%%%%%%%%%%%%%%%%%%%%%%%%%%%%%%%%%%%%%%%%%%%%%%%%%%%%%%%%%%%%%%%%%%%%%%%%
%%%%%%%%%%%%%%%%%%%%%%%%%%%%%%%%%%%%%%%%%%%%%%%%%%%%%%%%%%%%%%%%%%%%%%%%%%%%%%%%

%%%%%%%%%%%%%%%%%%%%%%%%%%%%%%%%%%%%%%%%%%%%%%%%%%%%%%%%%%%%%%%%%%%%%%%%%%%%%%%%
\subsection{Description of the EMA}
%%%%%%%%%%%%%%%%%%%%%%%%%%%%%%%%%%%%%%%%%%%%%%%%%%%%%%%%%%%%%%%%%%%%%%%%%%%%%%%%

A schematic representation of the 1-DOF positioning system is shown in Fig. \ref{fig-EPS}. This is a typical setup for electromagnetic valve actuators as presented in \cite{mercorelli2012antisaturating} and \cite{peterson2004extremum}. The electromagnetic system is composed of a multi-turn coil winding a magnetic circuit which is fixed to the frame. A silicon O-ring is installed for limited friction and for sealing. A spring is used to counteract the magnetic force and to ensure that the system returns to the closed position when no supplied.

\begin{figure}[h!]
\begin{center}
\includegraphics[height=4cm]{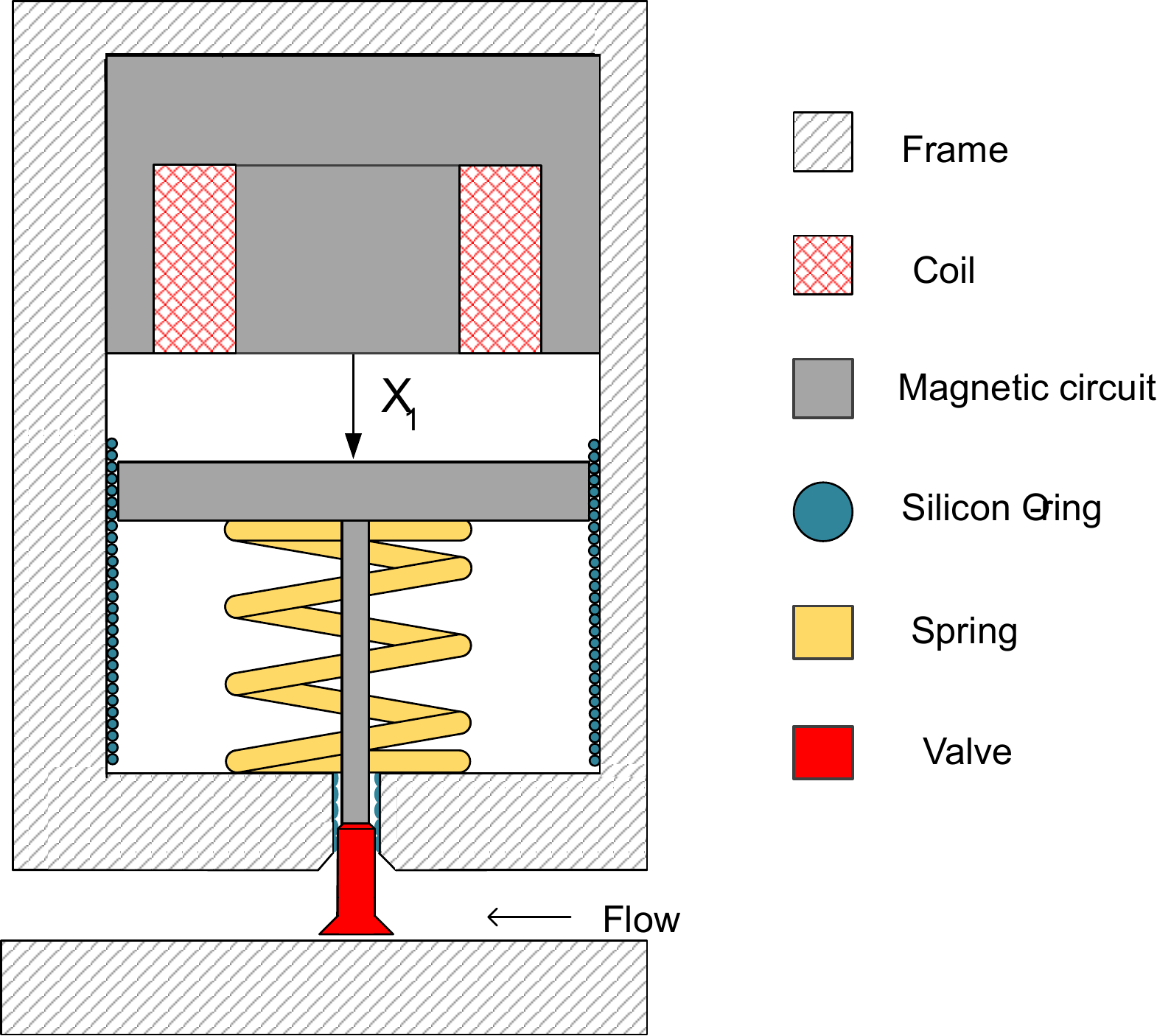} 
\end{center}
\caption{Schematic of the EMA}
\label{fig-EPS}
\end{figure}

%The EMA current is driven by a controlled switching voltage source where the EPS is modeled by a classic RL circuit.

%\begin{figure}[h!]
%\begin{center}
%\includegraphics[width=5cm]{} 
%\end{center}
%\caption{Schematic of the controlled voltage source}
%\label{fig-Schema_elec}
%\end{figure}

\label{par2a}

%The design of a stabilizing control law for the positioning system requires an accurate mathematical representation of the system. This physical system consists of a magnetic fixed part with a multi-turn coil ($N$ turn) and a magnetic moving part driven by the electromagnetic force. The goal is to control the position of the moving part by modifying the voltage at the terminals of the coil. 
In order to design a stabilizing control law for the EMA, we firstly develop a nonlinear model, which takes into account some new features like the flux fringing phenomenon. The following three sub-sections will develop the modeling steps to derive a non linear mathematical model. The EMA will be modeled  as a 3 dimensional system with the state vector $x = (x_1 ,\, x_2 ,\, x_3)^T$. The state components are the position $x_1$ of the mobile part, its velocity $x_2$ and the current of the coil $x_3$. The actuator input $u$ is the voltage at the terminals of the coil. 

%%%%%%%%%%%%%%%%%%%%%%%%%%%%%%%%%%%%%%%%%%%%%%%
\subsection{Electromagnetic part}
%%%%%%%%%%%%%%%%%%%%%%%%%%%%%%%%%%%%%%%%%%%%%%%

%The modeling approach is based on energy variations, it leads to an analytical expression of the magnetic force. The methodology below is derived from general principles of physics \cite{woodson1968electromechanical} \cite{bouchard1999electrotechnique}. By definition, the work $W$ is the energy supplied by a force when its point of application moves rectilinearly :
%
%\begin{equation}
%dW = F.dx_1,
%\label{eq-travail}
%\end{equation}
%where $dW$ represents an elementary work and $dx_1$ an elementary displacement along $x_1$ axis. In our case, yelds an expression for the magnetic force :
%
%\begin{equation}
%\begin{array}{r c l}
%\displaystyle F_{mag}      &  = & \displaystyle \frac{dW_{mag}}{dx_1}, \\[3mm]
%\end{array}
%\label{eq-travail_mag}
%\end{equation}
%
%with $W_{mag}$ the electromagnetic energy \cite{woodson1968electromechanical}and $L(x_1)$ the actuator inductance that depends on the airgap $x_1$  : 
Following \cite{woodson1968electromechanical}, the electromagnetic energy is defined by 

\begin{equation}
W_{mag} = \frac{1}{2}L(x_1) x_3^2 \text{,}
\label{eq-energie_mag}
\end{equation}
where $L$ is the actuator inductance and depends on $x_1$. The magnetic force is therefore: 
\begin{equation}
F_{mag}(x_1,x_3) = \frac{1}{2}x_3^2 \frac{dL}{dx_1}\text{.} 
\label{eq-Fmag}
\end{equation}
Furthermore, the total inductance of the system is defined by \cite{woodson1968electromechanical}: 
\begin{equation}
L = \frac{N^2}{\rho(x_1)},
\label{eq-inductance}
\end{equation}
with $N$ the number of coil's turn and $\rho(x_1)$ the total reluctance of the magnetic circuit. The reluctance is defined by $\rho = \frac{l}{\mu S}$ \cite{woodson1968electromechanical}, where $l$ is  the length of the magnetic tube, $S$ its section and $\mu$ the permeability of the material.

Using a finite element method simulation of the magnetic field density in the actuator ($Comsol$ $Multiphysics$ \cite{Comsol}), the path of the magnetic field lines are depicted in Fig. \ref{fig-Comsol}. 
\begin{figure}[h!]
\begin{center}
\includegraphics[height=4cm]{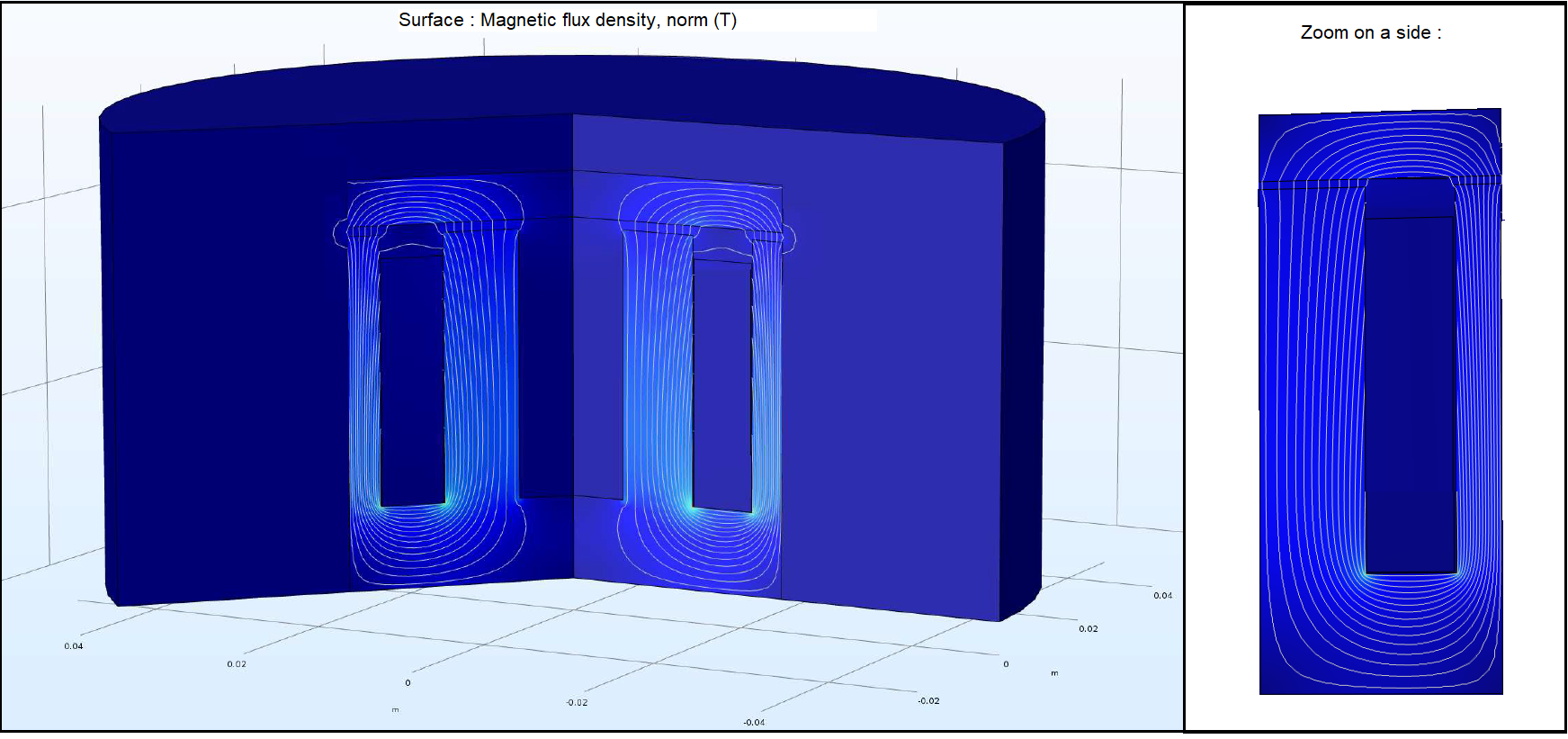} 
\end{center}
\caption{Comsol Simulation - Magnetic field lines}
\label{fig-Comsol}
\end{figure}

Notice that on Fig. \ref{fig-Comsol}, the flux fringing is visible in the airgap: some magnetic field lines do not take the shortest way between the mobile and the fixed part of the EMA. This last figure allows to elaborate an equivalent reluctance network following the magnetic field lines \cite{raminosoa2009reluctance} \cite{perho2002reluctance}. This network, given in Fig. \ref{fig-Schema_EPS}, is composed of six series reluctances delimited by straight sections: three for the body, two for the airgaps and one for the moving part.
All physical parameters used to compute the reluctances network are defined in the Fig. \ref{fig-Schema_EPS}. This choice of structure leads to a reluctance network depicted in Fig. \ref{fig-Schema_EPS}.
\begin{figure}[h!]
\begin{center}
\includegraphics[height=4.5cm]{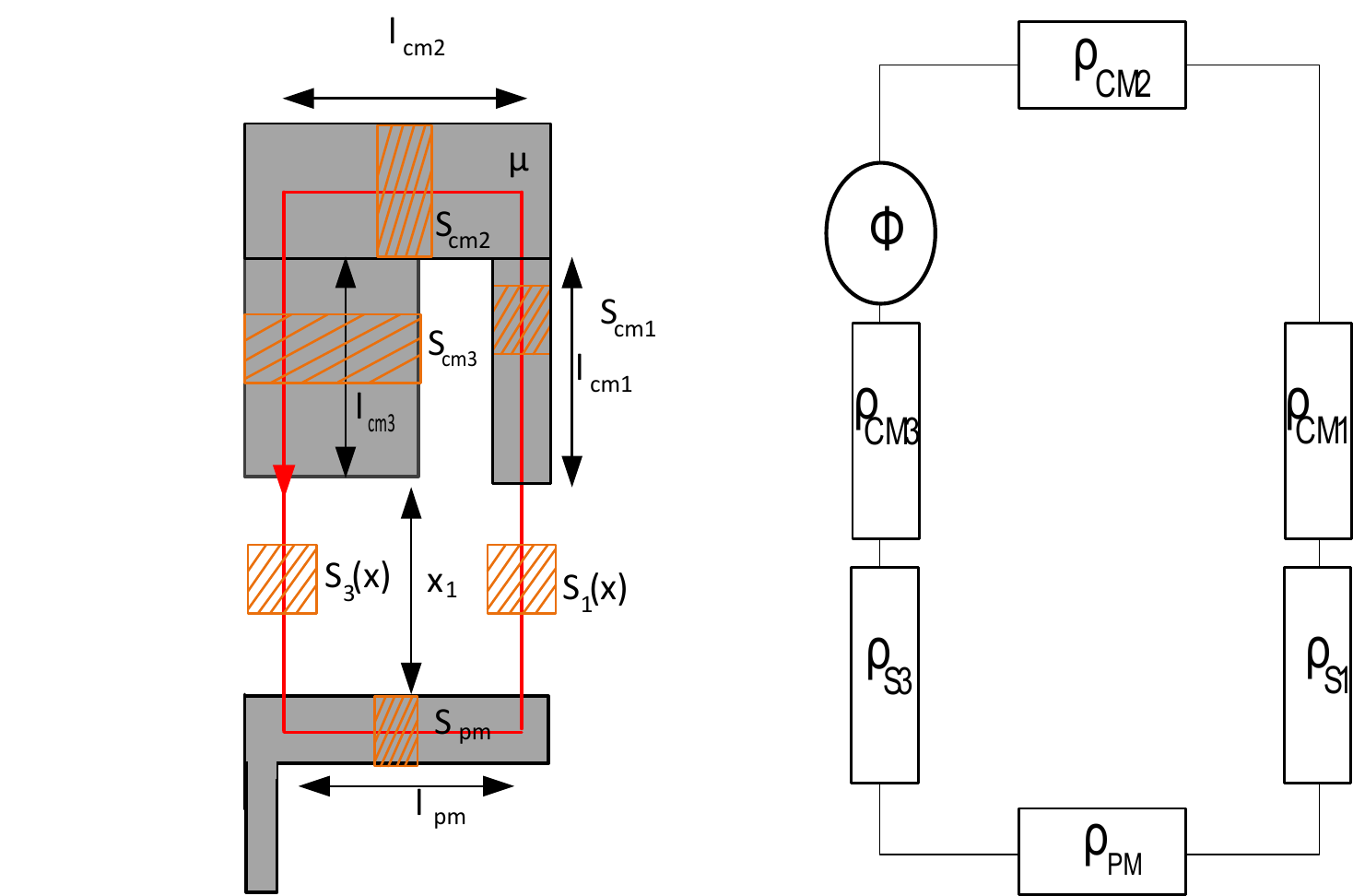} 
\caption{Representation of the physical parameters (left) and the associated reluctance network (right)}
\label{fig-Schema_EPS}
\end{center}
\end{figure}

Hence, the global reluctance of the system is computed by 
\begin{equation}
\begin{array}{r c l}
 \rho(x_1) & = & \displaystyle \frac{x_1}{\mu_0 S_1(x_1)} +\frac{x_1}{\mu_0 S_3(x_1)} + \rho_0 \text{,}\\		   
		   & = & \displaystyle \rho_x x_1 + \rho_0 \text{,}
\end{array}
\label{eq-Reluc-totale}
\end{equation}
with $\rho_0$ the sum of the magnetic circuit reluctances, and $\rho_x$ the sum of reluctance that depends on the airgap. The magnetic system of Fig. \ref{fig-Schema_EPS} has been simulated with the electrical engineering software $PLECS$. For nominal values, the analytical expression of $L$ in equations(\ref{eq-inductance})and (\ref{eq-Reluc-totale}) has been compared to numerical simulations. Fig. \ref{fig-Plecs} shows the computation of the actuator inductance.

\begin{figure}[h!]
\begin{center}
\includegraphics[height=3.2cm]{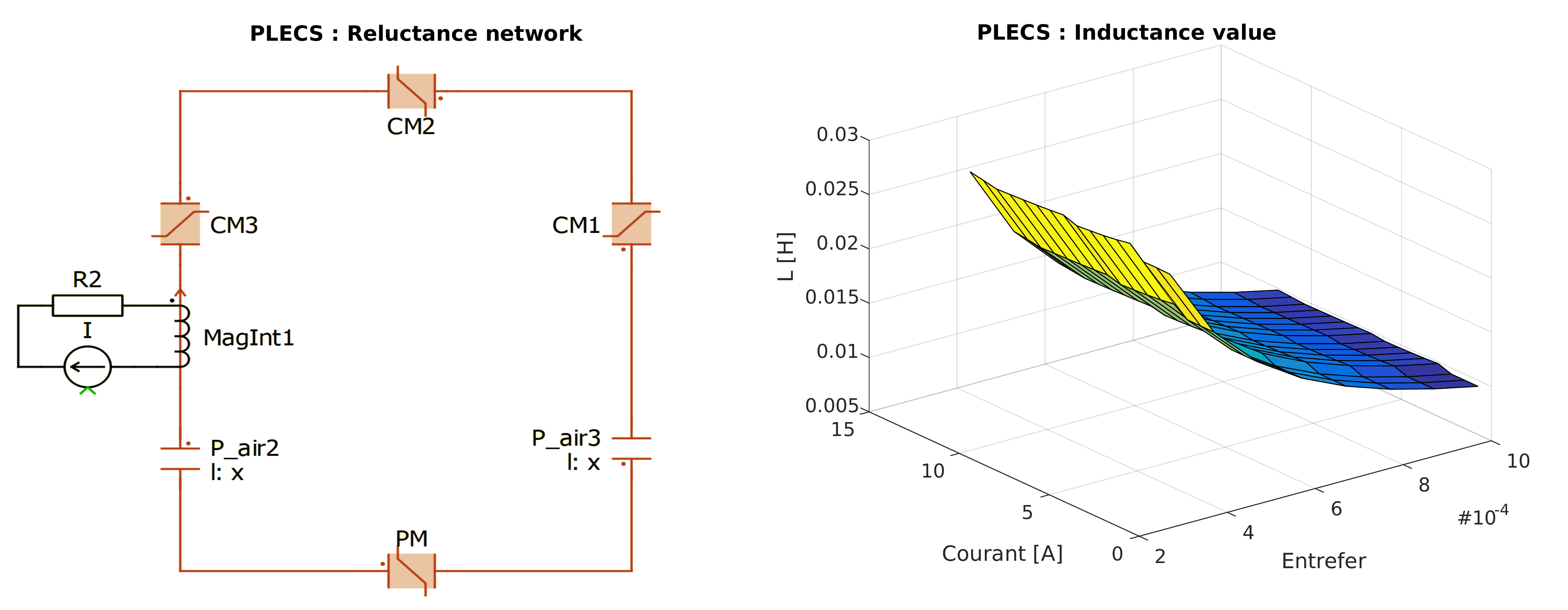} 
\caption{Reluctance network simulation using PLECS}
\label{fig-Plecs}
\end{center}
\end{figure}

The novelty of the model is to take into account a phenomenon arisen in the magnetic circuit called flux fringing:  the equivalent sections $S_i$ encompassing the flux in the airgap is not constant and depends on the airgap length $x_1$. Several models have been proposed in Bouchard \cite{bouchard1999electrotechnique}, Woodson \cite{woodson1968electromechanical} and Muhlethaler \cite{muhlethaler2012modeling}. Following Fig. \ref{fig-Comsol} and Fig. \ref{fig-Schema_EPS} two approaches are proposed. The first, most widely used, consists of an estimate of the expansion with a weighting coefficient such as $ S_{3} = \alpha_3 S_{CM3} $ and $ S_{1} = \alpha_1 S_{CM1} $. The second approach approximates the equivalent section by a function depending on the surface of the magnetic circuit and the airgap. For example, if the surface of the magnetic circuit is a rectangle, $S = ab$ , the surface in the airgap will be estimated as $ S_a = (a+x_1)(b+x_1) $. Note that those results are mostly empirical results and will serve as an adjustment variable for our model.

In order to cope with this physical phenomenon, airgap surfaces will be modeled as uncertain parameters. Let $\underline{S}_i$ and $\bar{S}_i$ be two positive constants bounding the surface $S_i(x_1) $, $ \displaystyle 0 < \underline{S}_i < S_i(x_1) < \bar{S}_i$. 
As a result $\rho(x_1)$ may be bounded by $\displaystyle  0 < \underline{\rho}(x_1) < \rho(x_1) < \bar{\rho}(x_1)$ with $\displaystyle \underline{\rho}(x_1) = \frac{x_1}{\mu_0 \bar{S}_1} +\frac{x_1}{\mu_0 \bar{S}_3} + \rho_0$ and $\displaystyle \bar{\rho}(x_1) = \frac{x_1}{\mu_0 \underline{S}_1} +\frac{x_1}{\mu_0 \underline{S}_3} + \rho_0 $.
In the same way, the total inductance defined in (\ref{eq-inductance}) $\displaystyle L(x_1) = \displaystyle \frac{N^2}{\displaystyle  \rho_x x_1 + \rho_0}$ is bounded by $\displaystyle  0 < \underline{L}(x_1) < L(x_1) < \bar{L}(x_1)$ with $\displaystyle \underline{L}(x_1) = \frac{N^2}{\bar{\rho}(x_1)}$ and $\displaystyle \bar{L}(x_1) = \frac{N^2}{\underline{\rho}(x_1)}$.
Finally, the expression of the magnetic force is obtained by:

\begin{equation}
 F_{mag}(x_1,x_3) = -\frac{1}{2} x_3^2 N^2 \frac{ \rho_x  }{\rho(x_1)^2} = -\frac{1}{2}x_3^2 \mu(x_1)\text{,}
 \label{eq-Fmag-end}
\end{equation}
with $\mu(x_1) = \displaystyle N^2 \frac{ \rho_x  }{\rho(x_1)^2} $. In the same way as the reluctance $\rho(x_1)$ and the inductance $L(x_1)$, define $\underline{\mu}(x_1) = N^2 \frac{ \rho_x  }{\bar{\rho}(x_1)^2} $ and $\bar{\mu}(x_1) = N^2 \frac{ \rho_x  }{\underline{\rho}(x_1)^2} $ such that $\mu(x_1)$ is bounded by  $ 0 < \underline{\mu}(x_1) < \mu(x_1) < \bar{\mu}(x_1)$.

%%%%%%%%%%%%%%%%%%%%%%%%%%%%%%%%%%%%%%%%%%%%%%%
\subsection{Electrical part}
%%%%%%%%%%%%%%%%%%%%%%%%%%%%%%%%%%%%%%%%%%%%%%%

Applying the input voltage $ u  $ at the terminals of the coil the electrical dynamic is described by:

\begin{equation}
u = Rx_3 + \frac{d\Phi}{dt}\text{,}
\end{equation} 
with $R$ the coil internal resistance. By definition of the magnetic flux $\Phi = Lx_3$ \cite{woodson1968electromechanical} we get: 

\begin{equation}
u = Rx_3 + L\frac{dx_3}{dt} + x_3\frac{dL}{dt} \text{.}
\end{equation} 
Since   $ \displaystyle \frac{dL}{dt} = \frac{\partial{L}}{\partial{x_1}}\frac{dx_1}{dt}$, a dynamical equation for the current $x_3$ is formulated as

\begin{equation}
\frac{dx_3}{dt} = \displaystyle \frac{1}{L(x_1)} \Big( u - Rx_3 + x_2x_3\mu(x_1)\Big)\text{.}
\label{eq-Elec}
\end{equation}

%%%%%%%%%%%%%%%%%%%%%%%%%%%%%%%%%%%%%%%%%%%%%%%
\subsection{Mechanical part}
%%%%%%%%%%%%%%%%%%%%%%%%%%%%%%%%%%%%%%%%%%%%%%%

The application of the Newton's second law to the moving part gives: 

\begin{equation}
m \frac{dx_2}{dt} = -F_{mag} + F_{ext}\text{,}
\label{eq-Newton}
\end{equation}
where $F_{ext} $ is the sum of the external forces: $F_{ext} =  F_{friction} + F_{spring}$.  $F_{friction}$ represents the friction force which is proportional to the speed,  $F_{friction}  =- \lambda x_2$ and $F_{spring}$ is the force due to the spring, proportional to the position, $F_{spring} = -K x_1$ where $\lambda$ and $K$ are positives constants.

%\begin{Remark}
%The minus sign is added  to $F_{mag}$ by convention because $F_{mag}$ and $F_{ext}$ have opposites directions
%\end{Remark}

%%%%%%%%%%%%%%%%%%%%%%%%%%%%%%%%%%%%%%%%%%%%%%%%%%%%%%%%%%%%%%%%%%%%%%%%%%%%%%%%
\subsection{State space model}
%%%%%%%%%%%%%%%%%%%%%%%%%%%%%%%%%%%%%%%%%%%%%%%%%%%%%%%%%%%%%%%%%%%%%%%%%%%%%%%%

In this work we consider a 3 dimensional model of the electromagnetic actuator. The state variables have been defined in Paragraph \ref{par2a}, and gathering equations (\ref{eq-Elec}) and (\ref{eq-Newton}), a state space model is obtained in (\ref{eq-systeme}). Note that the control input $u(t)$ has effect only on the third equation. We will name $x_0$ the initial state of the actuator.

% and $x_*$ a globally asymptotically stable equilibrium point of the system (\ref{eq-systeme}). \newline

\begin{equation}
   \left \{
   \begin{array}{r c l}
      \displaystyle \dot{x}_1  & = & \displaystyle x_2\text{,}   \\[4mm] 
      \displaystyle \dot{x}_2  & = & \displaystyle \frac{1}{m}\left[ \frac{1}{2} x_3^2 \mu(x_1) + F_{ext}(x_1,x_2) \right]\text{,} \\[4mm] 
      \displaystyle \dot{x}_3  & = & \displaystyle \frac{1}{L(x_1)} \left[u - Rx_3 + x_2x_3 \mu(x_1)\right]\text{.}
   \end{array}
   \right .
   \label{eq-systeme}
\end{equation}

%\begin{Remark}
%System (\ref{eq-systeme}) includes uncertain terms : 
%\end{Remark}
%\begin{itemize}
%\item $  \underline{\mu}(x_1) < \mu(x_1) < \bar{\mu}(x_1)  $
%\item $  \underline{L}(x_1) < L(x_1) < \bar{L}(x_1)  $
%\end{itemize}

%%%%%%%%%%%%%%%%%%%%%%%%%%%%%%%%%%%%%%%%%%%%%%%%%%%%%%%%%%%%%%%%%%%%%%%%%%%%%%%%
%%%%%%%%%%%%%%%%%%%%%%%%%%%%%%%%%%%%%%%%%%%%%%%%%%%%%%%%%%%%%%%%%%%%%%%%%%%%%%%%
%%%%%%%%%%%%%%%%%%%%%%%%%%%%%%%%%%%%%%%%%%%%%%%%%%%%%%%%%%%%%%%%%%%%%%%%%%%%%%%%
\section{Nonlinear control for the uncertain model}
%%%%%%%%%%%%%%%%%%%%%%%%%%%%%%%%%%%%%%%%%%%%%%%%%%%%%%%%%%%%%%%%%%%%%%%%%%%%%%%%
%%%%%%%%%%%%%%%%%%%%%%%%%%%%%%%%%%%%%%%%%%%%%%%%%%%%%%%%%%%%%%%%%%%%%%%%%%%%%%%%
%%%%%%%%%%%%%%%%%%%%%%%%%%%%%%%%%%%%%%%%%%%%%%%%%%%%%%%%%%%%%%%%%%%%%%%%%%%%%%%%

The proposed method is a combination of a backstepping control and a sliding mode control that makes the closed loop system states converge to a ball around the desired equilibrium point. This equilibrium point depends on the position reference signal $y_r$ the state $x_1$ has to track. The first subsection is dedicated to the design of a backstepping methodology to control the mechanical part of the system. The second subsection develops a sliding mode control to drive the coil current to the required value. Finally, a proof of the system convergence under the proposed control will then be detailed in the third subsection.

The controller design starts by the control of the position and velocity states. The choice of the backstepping method is natural due to the cascade form of the subsystem (\ref{eq-subsysteme}),
\begin{equation}
   \left \{
   \begin{array}{r c l c r c l}
      \displaystyle \dot{x}_1  & = & \displaystyle x_2  & \text{,} & \dot{x}_2  & = & \displaystyle \frac{1}{m}\left[ \frac{1}{2} x_{3d}^2 \mu + F_{ext}(x_1,x_2) \right]\text{.} 
   \end{array}
   \right .
   \label{eq-subsysteme}
\end{equation}
where $x_{3d}^2$ stands for the virtual control input.

\begin{theorem}
Consider $\alpha_1$, $\alpha_2$ two positives scalars, $a =  \displaystyle 1 - \alpha_1^2 + \frac{\lambda}{m} \alpha_1 - \frac{K}{m}$ and $b = \alpha_1 - \displaystyle \frac{\lambda}{m}  $. If the matrix $Q = \displaystyle \begin{pmatrix} -\alpha_1 & a/2\\ a/2 & b-\alpha_2 \end{pmatrix}$ is negative definite, then the virtual control law  $x_{3d}^2 = \displaystyle -\frac{2m}{\underline{\mu}(x_1+y_r)}\alpha_2 (x_2 + \alpha_1 x_1 - \alpha_1 y_r)$ makes the subsystem (\ref{eq-subsysteme}) convergent to a ball of center $  \begin{pmatrix} y_r \\ 0 \end{pmatrix}$ and of radius $ \frac{\delta}{\alpha \theta}$ , with $\delta = \frac{K}{m} \mid y_r \mid $ , $\alpha =\mid \lambda_{min} (Q) \mid$ and $\theta$ is a positive scalar lower than one.
\label{th_1}
\end{theorem}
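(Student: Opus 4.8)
The plan is to carry out a Lyapunov analysis in error coordinates and then extract an ultimate–boundedness estimate via the standard $\theta$–trick. First I would introduce the position error $z_1 = x_1 - y_r$ and the backstepping variable $z_2 = x_2 + \alpha_1 z_1 = x_2 + \alpha_1 x_1 - \alpha_1 y_r$, i.e.\ exactly the quantity that appears in the virtual control law; since $z=(z_1,z_2)^T=0$ corresponds to $(x_1,x_2)=(y_r,0)$, a bound on $\|z\|$ is a bound on the distance of $(x_1,x_2)$ to the desired point. In these coordinates $\dot z_1 = x_2 = z_2-\alpha_1 z_1$, and differentiating $z_2$ along (\ref{eq-subsysteme}), using $F_{ext}=-\lambda x_2 - K x_1$, gives $\dot z_2 = \tfrac{1}{2m}x_{3d}^2\,\mu + (\alpha_1-\tfrac{\lambda}{m})x_2 - \tfrac{K}{m}x_1$.

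The next step is to substitute $x_{3d}^2 = -\tfrac{2m}{\underline{\mu}}\alpha_2 z_2$. The only place the parametric uncertainty enters is here: the law cancels the \emph{lower} bound $\underline{\mu}$, not the true $\mu$, so $\tfrac{1}{2m}x_{3d}^2\mu = -\tfrac{\mu}{\underline{\mu}}\alpha_2 z_2$ with $\tfrac{\mu}{\underline{\mu}}\ge 1$ by the bounds of Section~II. Replacing $x_2 = z_2-\alpha_1 z_1$ and $x_1 = z_1 + y_r$ then collapses the dynamics to $\dot z_1 = z_2 - \alpha_1 z_1$ and $\dot z_2 = (a-1)z_1 + (b - \tfrac{\mu}{\underline{\mu}}\alpha_2)z_2 - \tfrac{K}{m}y_r$, with precisely the constants $a$ and $b$ of the statement; the residual term $-\tfrac{K}{m}y_r$ is the nonvanishing spring–offset disturbance mentioned in the Introduction, and is what rules out asymptotic stability and forces a nonzero residual ball.

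Taking $V = \tfrac12\|z\|^2$, a direct computation yields $\dot V = -\alpha_1 z_1^2 + a z_1 z_2 + (b - \tfrac{\mu}{\underline{\mu}}\alpha_2)z_2^2 - \tfrac{K}{m}y_r z_2$. Using $\tfrac{\mu}{\underline{\mu}}\ge 1$ and $\alpha_2>0$ to weaken the $z_2^2$ coefficient to $b-\alpha_2$, this becomes $\dot V \le z^T Q z - \tfrac{K}{m}y_r z_2 \le -\alpha\|z\|^2 + \delta\|z\|$, where $Q\prec0$ by hypothesis, $\alpha = |\lambda_{\min}(Q)|$ and $\delta=\tfrac{K}{m}|y_r|$ (using $|z_2|\le\|z\|$). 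Finally, for any $\theta\in(0,1)$ write $-\alpha\|z\|^2 + \delta\|z\| = -\alpha(1-\theta)\|z\|^2 - \big(\alpha\theta\|z\|-\delta\big)\|z\|$; hence $\dot V \le -\alpha(1-\theta)\|z\|^2 < 0$ whenever $\|z\| > \tfrac{\delta}{\alpha\theta}$, and a standard comparison / invariant–set argument gives convergence of $z$ — equivalently of $(x_1,x_2)$ to $(y_r,0)$ — into the ball of radius $\tfrac{\delta}{\alpha\theta}$.

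The calculations above are routine; I expect the main things to get right to be (i) the algebra that turns the substituted $\dot z_2$ into the coefficients $a,b$ and the matrix $Q$ after the coordinate change, and (ii) making explicit what negative definiteness of $Q$ actually requires — by Sylvester's criterion, $\alpha_1>0$ together with $\alpha_1(\alpha_2-b) > a^2/4$ — since this is the inequality that couples the design gains $\alpha_1,\alpha_2$ to the physical data $\lambda,K,m$. A caveat worth noting is that $x_{3d}^2$ produced by this law is not sign–definite, so realising it through the physical current $x_3$ is left to the sliding–mode step of the next subsection; within this theorem $x_{3d}^2$ is treated purely as the virtual input of (\ref{eq-subsysteme}).
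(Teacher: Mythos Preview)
Your proposal is correct and follows essentially the same route as the paper's proof: the same error coordinates $(z_1,z_2)$, the same quadratic Lyapunov function $V=\tfrac12\|z\|^2$, the bound $\mu/\underline{\mu}\ge 1$ to pass from $b-\tfrac{\mu}{\underline{\mu}}\alpha_2$ to $b-\alpha_2$, and the same $\theta$-splitting to obtain the ultimate bound $\delta/(\alpha\theta)$. Your additional remarks on the Sylvester conditions for $Q\prec0$ and on the sign of $x_{3d}^2$ are sound observations that the paper does not spell out.
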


\begin{proof}
Consider the classical change of variable

\begin{equation}
   \left \{
   \begin{array}{r c l}
      \displaystyle z_1  & = & x_1 - y_r \text{,}\\[4mm] 
      \displaystyle z_2  & = & x_2 + \alpha_1 z_1\text{,}
   \end{array}
   \right .
   \label{eq-var_error}
\end{equation}
with $\alpha_1$ a positive scalar. The subsystem (\ref{eq-subsysteme}) can be rewritten as: 
\begin{equation}
   \left \{
   \begin{array}{r c l}
      \displaystyle \dot{z}_1  & = &-\alpha_1 z_1 + z_2  \text{,} \\[4mm] 
      \displaystyle \dot{z}_2  & = & \displaystyle \frac{1}{m}\left[ \frac{1}{2} x_{3d}^2 \mu + F_{ext}(z_1,z_2) \right]+ \alpha_1z_2  - \alpha_1^2z_1 \text{.}
   \end{array}
   \right .
   \label{eq-subsysteme_z}
\end{equation}
In order to achieve the closed loop desired properties, let us consider a Lyapunov function of the form $V_1 = \frac{1}{2}z_1^2+ \frac{1}{2}z_2^2\text{.}$ The derivative of $V_1$ along the trajectories of (\ref{eq-subsysteme_z}) leads to: 

\begin{equation}
\dot{V}_1 = \displaystyle -\alpha_1 z_1^2 +a z_1 z_2  + b z_2^2  + z_2 \frac{1}{2m} x_{3d}^2 \mu -z_2\frac{K}{m}y_r\text{,}
\end{equation}
with $a =  \displaystyle 1 - \alpha_1^2 + \frac{\lambda}{m} \alpha_1 - \frac{K}{m}$ and $b = \alpha_1 - \displaystyle \frac{\lambda}{m}  $. Consider the control law  $ \displaystyle x_{3d}^2 = -\frac{2m}{\underline{\mu}}\alpha_2 z_2$ where $\alpha_2 > \mid b \mid$ is a positive scalar.  $\dot{V}_1$ can then be rewritten as: 
\begin{equation}
\begin{array}{r c l }
\dot{V}_1 & =  & \displaystyle -\alpha_1 z_1^2 +a z_1 z_2  + z_2^2 \left(b -\alpha_2 \frac{\mu}{\underline{\mu}}\right) -\frac{K}{m}z_2y_r\text{,}\\
\dot{V}_1 & \leq & \displaystyle -\alpha_1 z_1^2 +a z_1 z_2  + (b -\alpha_2 ) z_2^2 -\frac{K}{m}z_2y_r\text{,}\\
\dot{V}_1 & \leq & \displaystyle z^T Q z -\frac{K}{m}z_2y_r\text{,}
\end{array}
\end{equation}
with $ z= \begin{pmatrix} z_1 & z_2 \end{pmatrix} ^T $ and $\alpha_1$ and $\alpha_2$ are chosen to obtain $Q = \displaystyle \begin{pmatrix} -\alpha_1 & a/2\\ a/2 & b -\alpha_2  \end{pmatrix}$ a negative definite matrix. Therefore,
\begin{equation}
\dot{V}_1 \leq \displaystyle \lambda_{min} (Q) \mid \mid z \mid \mid ^2 +\frac{K}{m} \mid z_2 \mid y_r\text{.}
\end{equation}
Notice that the term $ -\frac{K}{m}y_r $ can be considered as a nonvanishing perturbation \cite{khalil1996noninear} and as $y_r$ is a bounded signal, there exists $\delta >0$ such that $ \mid \frac{K}{m}y_r  \mid \leq \delta$. Consider a scalar $\theta \in [0,1]$, it implies that 
\begin{equation}
\begin{array}{r c l}
\dot{V}_1  & \leq  & \displaystyle -\alpha \mid \mid z \mid \mid ^2 +\mid z_2 \mid \delta\\          
           & \leq  & -(1-\theta) \alpha \mid \mid z \mid \mid ^2 - \theta \alpha \mid \mid z \mid \mid^2  + \mid \mid z \mid \mid  \delta  \\
           & \leq  & -(1-\theta) \alpha \mid \mid z \mid \mid ^2 \qquad \forall \; \;\mid \mid z \mid \mid > \displaystyle \frac{\delta}{\alpha \theta}\text{.} \\
\end{array}
\end{equation}
% & =     & -(1-\theta) \alpha \mid \mid z \mid \mid ^2 - \theta \alpha \mid \mid z \mid \mid^2 + \mid z_2 \mid \delta \\

Following \cite{khalil1996noninear}, the subsystem (\ref{eq-subsysteme}) converges to the disc of center $0$ and radius $ \frac{\delta}{\alpha \theta}$ which concludes the proof.

\end{proof}

\begin{Remark}
As the magnetic force $F_{mag}$ is uncertain, it cannot compensate exactly the constant term $ \displaystyle -\frac{K}{m}y_r$. The best we can do is to minimize its effect, leading to bound ultimately the subsystem by a small bound \cite{khalil1996noninear}.
\end{Remark}

\begin{Remark}
A more general Lyapunov function of the form $V_1 = z^TPz$ may be used in order to reduce the size of the disc. Notice also that an optimization scheme could be implemented in order to minimize the size of the ball in which the states $(x_1,x_2)$ converges.
\end{Remark}
The next step of the controller design is to design a control law such that $x_3$ converges to $x_{3d}$. Since the functions $L(x_1)$ and $\mu(x_1)$ are uncertain, we rely on a sliding mode approach of order 1 \cite{perruquetti2002sliding}.\newline

\begin{theorem}
Consider a scalar $\epsilon>0$, $\alpha_3 =   \displaystyle R \mid   x_{3d} \mid + \mid (z_2-\alpha_1z_1)(S + x_{3d}) \mid \bar{\mu} +  \mid\dot{x}_{3d} \mid \bar{L}(x_1) + \epsilon$, the control law $u = - \alpha_3 sign(S)$, with the sliding surface $S = x_3 - x_{3d}$ makes $x_3$ converge in finite time towards $x_{3d}$.
\end{theorem}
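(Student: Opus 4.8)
The plan is to run the classical order-one sliding-mode reachability argument of \cite{perruquetti2002sliding} on the quadratic function $V_2 = \tfrac12 S^2$, and to verify that the gain $\alpha_3$ has been built precisely so that its derivative is dominated by a strictly negative term proportional to $|S|$.

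First I would differentiate the sliding variable, $\dot S = \dot x_3 - \dot x_{3d}$, substitute the third equation of (\ref{eq-systeme}) for $\dot x_3$, and invert the change of variables (\ref{eq-var_error}), i.e. $x_2 = z_2 - \alpha_1 z_1$ and $x_3 = S + x_{3d}$. Multiplying by the (strictly positive) inductance this yields
\begin{equation*}
L(x_1)\,\dot S \;=\; u - R\,(S+x_{3d}) + (z_2-\alpha_1 z_1)(S+x_{3d})\,\mu(x_1) - L(x_1)\,\dot x_{3d}.
\end{equation*}
Here one should observe that $\dot x_{3d}$ is obtained by differentiating the expression of $x_{3d}$ given in Theorem~\ref{th_1} and replacing $\dot x_1=x_2$ and $\dot x_2$ by the first two lines of (\ref{eq-systeme}); since $u$ enters only the $x_3$-dynamics, $\dot x_{3d}$ does not depend on $u$, so the feedback $u=-\alpha_3\,\mathrm{sign}(S)$ is well defined.

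Next I would multiply by $S$, insert $u=-\alpha_3\,\mathrm{sign}(S)$ so that $Su=-\alpha_3|S|$, discard the nonpositive term $-RS^2$, and bound the three remaining terms, respectively, by $R|x_{3d}||S|$, $|(z_2-\alpha_1 z_1)(S+x_{3d})|\,\bar{\mu}(x_1)\,|S|$ and $\bar{L}(x_1)\,|\dot x_{3d}|\,|S|$, using $0<\mu(x_1)<\bar{\mu}(x_1)$ and $0<L(x_1)<\bar{L}(x_1)$. The bracket that multiplies $-|S|$ is then exactly $\alpha_3 - R|x_{3d}| - |(z_2-\alpha_1 z_1)(S+x_{3d})|\bar{\mu}(x_1) - \bar{L}(x_1)|\dot x_{3d}| = \epsilon$, so that $L(x_1)\,S\dot S \le -\epsilon|S|$. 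Dividing by $L(x_1)$ and using that $L(x_1)=N^2/\rho(x_1)\le N^2/\rho_0$ is uniformly bounded (because $\rho(x_1)=\rho_x x_1+\rho_0\ge\rho_0$), I obtain the $\eta$-reachability condition $S\dot S \le -\eta|S|$ with the constant $\eta=\epsilon\,\rho_0/N^2>0$. Equivalently $\tfrac{d}{dt}|S|\le-\eta$ whenever $S\neq0$, hence $|S(t)|\le|S(0)|-\eta t$: the surface $S=0$ is attained in finite time $t^\star\le|S(0)|/\eta$ and maintained thereafter, i.e. $x_3\equiv x_{3d}$, which is the claim.

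The reachability estimate itself is routine; the delicate point is rather the a priori regularity of the virtual reference. One needs $x_{3d}^2\ge0$ — which constrains the sign of $z_2=x_2+\alpha_1 z_1$ — for $x_{3d}$ to be well defined, and $x_{3d}$ must be continuously differentiable with bounded derivative for $\alpha_3$ to be a genuine, finite gain; note in particular that $\dot x_{3d}$ carries a factor $1/x_{3d}$ and requires attention near $x_{3d}=0$. The interpretation of the discontinuous input in the sense of Filippov solutions is standard \cite{perruquetti2002sliding}. These well-posedness questions, together with the way this result is chained with Theorem~\ref{th_1}, belong to the convergence analysis that follows.
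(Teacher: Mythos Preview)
Your argument is correct and follows essentially the same route as the paper: the quadratic function $V_2=\tfrac12 S^2$, substitution of $u=-\alpha_3\,\mathrm{sign}(S)$, and bounding the uncertain terms by $\bar\mu$ and $\bar L$ so that the prescribed $\alpha_3$ leaves a strictly negative margin $-\epsilon|S|$. Your treatment is in fact tidier on one point: by first carrying $L(x_1)$ through and then dividing with the uniform bound $L(x_1)\le N^2/\rho_0$, you obtain a genuine constant $\eta$ in the reachability condition, whereas the paper's last step writes $\dot V_2\le -\epsilon\,S$ without making the $1/L(x_1)$ factor explicit; your closing caveats on the well-definedness and differentiability of $x_{3d}$ are also pertinent and not addressed in the paper.
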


\begin{proof}
Consider the sliding surface $S = x_3 - x_{3d}$ and the Lyapunov function $V_2 = \frac{1}{2}S^2$. The derivative of $V_2$ along the trajectories of (\ref{eq-systeme}) leads to:

\begin{equation}
\begin{array}{r c l}
\dot{V}_2   & =  & \displaystyle S \frac{1}{L(x_1)} u    + S \Big( \frac{1}{L(x_1)} \left[- R (S + x_{3d}) \right.  \\[3mm]
            &    & \displaystyle \Big. \left. + (z_2-\alpha_1z_1)(S + x_{3d})  \mu  \right] - \dot{x}_{3d} \Big)\text{.}
\end{array}
\end{equation}
Let us choose $ u =  - \alpha_3 sign(S)$ with a gain $\alpha_3 > 0$ then 

\begin{equation}
\begin{array}{r c l }
\dot{V}_2   & = & \displaystyle -\alpha_3 \mid S \mid \frac{1}{L(x_1)}    - R S^2\frac{1}{L(x_1)} -  S  \dot{x}_{3d}  \\
 &  +  & \displaystyle  S \Big( \frac{1}{L(x_1)} \left[- R  x_{3d} +(z_2-\alpha_1z_1)(S + x_{3d})  \mu  \right]  \Big)\text{.}
\end{array}
\end{equation}
Notice that $ \displaystyle \frac{1}{L(x_1)}$ is an uncertain but strictly positive function and therefore,
% \dot{V}_2 &  \leq & \displaystyle -\alpha_3 \mid S \mid \frac{1}{L(x_1)}   + \mid S \mid  \left( \frac{1}{L(x_1)} \left[- R x_{3d} + x_2x_3 N^2 \mu  \right] - \dot{x}_{3d} \right)
\begin{equation}
\begin{array}{r c l}
\dot{V}_2 	  &  \leq & \displaystyle  \mid S \mid   \Big| \left( \frac{1}{L(x_1)} \left[-\alpha_3 - R x_{3d} \right. \right. \\
              &       & \Big. \left. + (z_2-\alpha_1z_1)(S + x_{3d})  \mu -  \dot{x}_{3d}L(x_1) \right] \Big) \Big|\text{.}
\end{array}
\end{equation}
Setting 
\begin{equation}
\alpha_3 =   \displaystyle R \mid   x_{3d} \mid + \mid (z_2-\alpha_1z_1)(S + x_{3d}) \mid \bar{\mu} +  \mid\dot{x}_{3d} \mid \bar{L}(x_1) + \epsilon\text{,}
\label{eq-a3}
\end{equation}
where $\epsilon > 0$, we obtain $\dot{V}_2 \leq - \epsilon S = - \epsilon \sqrt{V_2}$ which proves the convergence in finite time of $x_3$ towards $x_{3d}$.
\end{proof}
The last step consists in proving the convergence of the whole system (\ref{eq-systeme}) to a ball around the desired equilibrium point with the control laws defined in Theorem 1 and Theorem 2

\begin{theorem}
Consider $\epsilon_1$, $\alpha_1$, $\alpha_2$ three positives scalars, $a =  \displaystyle 1 - \alpha_1^2 + \frac{\lambda}{m} \alpha_1 - \frac{K}{m}$ and $b = \alpha_1 - \displaystyle \frac{\lambda}{m}  $, $Q = \displaystyle \begin{pmatrix} -\alpha_1 & a/2\\ a/2 & b-\alpha_2 \end{pmatrix}$,$\alpha = \mid\lambda_{min} (Q)\mid$,$\alpha_3 =   \displaystyle R \mid   x_{3d} \mid + \mid (z_2-\alpha_1z_1)(S + x_{3d}) \mid \bar{\mu} +  \mid\dot{x}_{3d} \mid \bar{L}(x_1) + \epsilon$, and $\theta$ a positive scalar lower than one. If the matrix $Q$ is negative definite, then the control law  $u = -\alpha_3 sign (S)$ makes the system (\ref{eq-systeme}) convergent to a disc of center $  \begin{pmatrix} y_r \\ 0 \\ x_{3d}\end{pmatrix}$ of radius $ \frac{\delta}{\alpha \theta}$, belonging the map $x_3 = x_{3d}$.
\end{theorem}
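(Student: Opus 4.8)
The plan is to exploit the two–time–scale structure created by the two controllers: the sliding mode of Theorem~2 acts on the fast current variable $x_3$ and forces the surface $S = x_3 - x_{3d}$ to vanish in finite time, after which the slow $(x_1,x_2)$–dynamics coincide exactly with the closed loop already analysed in Theorem~1. So the proof is essentially a cascade/reduction argument, with Theorems~1 and~2 supplying the two halves.

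First I would invoke Theorem~2: with $u = -\alpha_3\,\mathrm{sign}(S)$ and $\alpha_3$ as in (\ref{eq-a3}), one has $\dot V_2 \le -\epsilon\sqrt{V_2}$, so $S$ reaches $0$ at some finite time $t_r$ and, the reaching inequality being strict, $S \equiv 0$ for all $t \ge t_r$ (the sliding manifold is invariant). On this manifold $x_3 = x_{3d}$, hence $\tfrac12 x_3^2 \mu(x_1) = \tfrac12 x_{3d}^2 \mu(x_1)$, and substituting the expression of $x_{3d}^2$ from Theorem~1 the second equation of (\ref{eq-systeme}) becomes exactly the reduced subsystem (\ref{eq-subsysteme_z}) with the realised virtual control, namely $\dot z_2 = \tfrac1m\!\left[-\alpha_2 z_2\,\mu/\underline\mu + F_{ext}\right] + \alpha_1 z_2 - \alpha_1^2 z_1$. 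Therefore the Lyapunov estimate of Theorem~1 holds verbatim for $t \ge t_r$: with $V_1 = \tfrac12 z_1^2 + \tfrac12 z_2^2$ one gets $\dot V_1 \le -(1-\theta)\alpha\|z\|^2$ whenever $\|z\| > \delta/(\alpha\theta)$, so $z = (x_1 - y_r,\; x_2 + \alpha_1 z_1)$, and hence $(x_1,x_2)$, converges to the disc of centre $(y_r,0)$ and radius $\delta/(\alpha\theta)$.

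Second I would dispose of the reaching transient $[0,t_r]$. Since $L(x_1)$ is bounded below by $\underline L > 0$ and all the data of (\ref{eq-systeme}) are locally Lipschitz, the closed-loop solution exists and remains bounded on the compact interval $[0,t_r]$; no finite escape occurs before the sliding phase starts, so the ultimate behaviour is governed entirely by the post-$t_r$ analysis. Combining the two parts, every trajectory enters and stays in $\{x_3 = x_{3d}\} \cap \{\|(x_1-y_r,\;x_2)\| \le \delta/(\alpha\theta)\}$, which is the announced disc of centre $(y_r,0,x_{3d})$ and radius $\delta/(\alpha\theta)$ lying in the map $x_3 = x_{3d}$.

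The delicate point is the interface between the two loops. On one hand $x_{3d}$ depends on $x_1$ (and on $y_r$), so $\dot x_{3d}$ enters $\alpha_3$ and must be shown to stay finite along trajectories, which it does once $(x_1,x_2)$ are bounded; on the other hand the virtual control is naturally $x_{3d}^2$, not $x_{3d}$, so the sign choice turning $x_{3d} = \pm\sqrt{x_{3d}^2}$ into a well-defined, sufficiently smooth reference must be justified. Under the standing assumption that $x_{3d}^2 \ge 0$ along the closed loop, this cascade argument closes; making the reaching-phase boundedness and the regularity of $x_{3d}$ fully rigorous is the main obstacle, whereas the two Lyapunov steps are immediate consequences of Theorems~1 and~2.
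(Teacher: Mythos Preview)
Your cascade argument is correct, but it is organised quite differently from the paper's own proof. The paper does not separate the reaching and sliding phases in time; instead it works with the single composite Lyapunov function $V=V_1+V_2$ and differentiates it along the \emph{full} system (\ref{eq-systeme}) with $x_3\neq x_{3d}$. Because $x_3^2=(S+x_{3d})^2$, this produces an extra cross-coupling contribution in $\dot V_1$, which the paper bounds by $\tfrac{2}{m}|S|(x_3+x_{3d})\bar\mu$ and then absorbs into the sliding-mode budget by fixing $\epsilon=\tfrac{2}{m}(x_3+x_{3d})\bar\mu+\epsilon_1$; only the residual $-\epsilon_1|S|$ survives in $\dot V$. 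That is precisely why the theorem statement carries both $\epsilon$ and the free parameter $\epsilon_1$: their difference is the price of the interaction term. Your two-phase argument, by waiting until $S\equiv 0$ before invoking Theorem~1, never generates this cross term and therefore never needs to relate $\epsilon$ to $\epsilon_1$ --- so as written it leaves the role of $\epsilon_1$ in the statement unexplained.

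The trade-off is clear: the paper's composite-Lyapunov route yields one differential inequality valid for all $t\ge 0$ and makes explicit how much extra sliding gain is required to dominate the $z$--$S$ coupling; your sequential argument is more modular and reuses Theorems~1 and~2 verbatim, at the cost of the reaching-phase boundedness caveat you rightly flag (a caveat the paper's state-dependent $\epsilon$ does not entirely escape either).
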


\begin{proof}
Using $V = V_1 + V_2$ , we proved in Theorem 1 and Theorem 2 that 
\begin{equation}
\begin{array}{r c l}
\dot{V}   &     \leq   & -\alpha \mid \mid z \mid \mid ^2 + \delta \mid \mid z \mid \mid  - \epsilon \mid S \mid  + \frac{2}{m}\mid S \mid (x_3+x_{3d})\bar{\mu} \text{.}\\
\end{array}
\end{equation}
%\leq   & -\alpha \mid \mid z \mid \mid ^2 + \delta \mid \mid z \mid \mid  - \epsilon \mid S \mid  + \frac{2}{m}S(x_3+x_{3d})\mu \text{,}\\[3mm]
Taking $ \displaystyle \epsilon = \frac{2}{m}(x_3+x_{3d})\bar{\mu} + \epsilon_1$ with $\epsilon_1 >0$
\begin{equation}
\begin{array}{r c l}
\dot{V}    &   \leq     & -\alpha \mid \mid z \mid \mid ^2 + \delta \mid \mid z \mid \mid  - \epsilon_1 \mid S \mid  \text{,}\\
           & \leq  & -(1-\theta) \alpha \mid \mid z \mid \mid ^2  - \epsilon_1 \mid S \mid\; \text{,}\;\; \forall \mid \mid z \mid \mid > \displaystyle \frac{\delta}{\alpha \theta} \text{.}\\
\end{array} 
\end{equation}
As $S$ converges to $0$ in finite time we prove that the whole system converges asymptotically to an invariant and attractive disc of center $  \begin{pmatrix} y_r \\ 0 \\ x_{3d}\end{pmatrix}$ of radius $ \frac{\delta}{\alpha \theta}$, belonging the map $x_3 = x_{3d}$.
\end{proof}
\begin{Remark}
The proposed approach is a combined backstepping - sliding mode control. Notice that a backstepping approach cannot be applied all along the design due to the nonvanishing term depending on a spring force which cannot be compensated by the control because of the uncertainty.
In the same way, it appears to be complicated to design a sliding mode from the beginning because the nonvanishing term does not satisfy the matching condition, the usual condition allowing a sliding mode control to reject a perturbation \cite{perruquetti2002sliding}.
\end{Remark}

%%%%%%%%%%%%%%%%%%%%%%%%%%%%%%%%%%%%%%%%%%%%%%%%%%%%%%%%%%%%%%%%%%%%%%%%%%%%%%%%
\section{Simulation}
%%%%%%%%%%%%%%%%%%%%%%%%%%%%%%%%%%%%%%%%%%%%%%%%%%%%%%%%%%%%%%%%%%%%%%%%%%%%%%%%
Let us consider the EMA whose parameters are given in Table \ref{table-param-EMA} and consider the control defined by Theorem 3 with the followings parameters   $\alpha_1 = 10$, $\alpha_2 = 20000$ and $\epsilon_1 = 10$.
\begin{table}[h]
\begin{center}
\begin{tabular}{|c|c|c|}
  \hline
  Name  & Value  & Description   \\
  \hline
  $\rho_x$ & 2.8 $\times 10^10 $ $H^{-1}m^{-1}$⋅ &  Airgap reluctance\\
  \hline
  $\rho_0$ & 630 $H^{-1}$ & Magnetic circuit reluctance\\ 
  \hline 
  $\lambda$ & 5 $Nm^{-2}$ & Friction coefficient \\
  \hline
  $K$   & 120 $Nm^{-1}$ & Spring constant\\
  \hline
  $N$ & 70  & Coil winding\\
  \hline
  $m$ & 0.1 $kg$ & Mass of the moving part\\
  \hline
  $R$ & 0.4 $\Omega	$  & coil intern resitor\\
  \hline
  $x_0$ & (0.001 $m$,0,0)$^T$ & Initial state\\
  \hline
\end{tabular}
\caption {}
\label{table-param-EMA}
\end{center}
\end{table} 
\vspace{-0.5cm}
In this simulation, $y_r$ is a step signal of 3 $mm$ amplitude.
The results of the position tracking simulation are shown in the Figure \ref{fig-resultats_x2}, the speed, the current and the input voltage $u$ respectively in Figures \ref{fig-resultats_x2}, \ref{fig-resultats_x3} and \ref{fig-resultats_u}.

\begin{figure}[h!]
\begin{center}
\includegraphics[trim={0mm 67mm 0mm 88mm}, clip,,width=8.8cm]{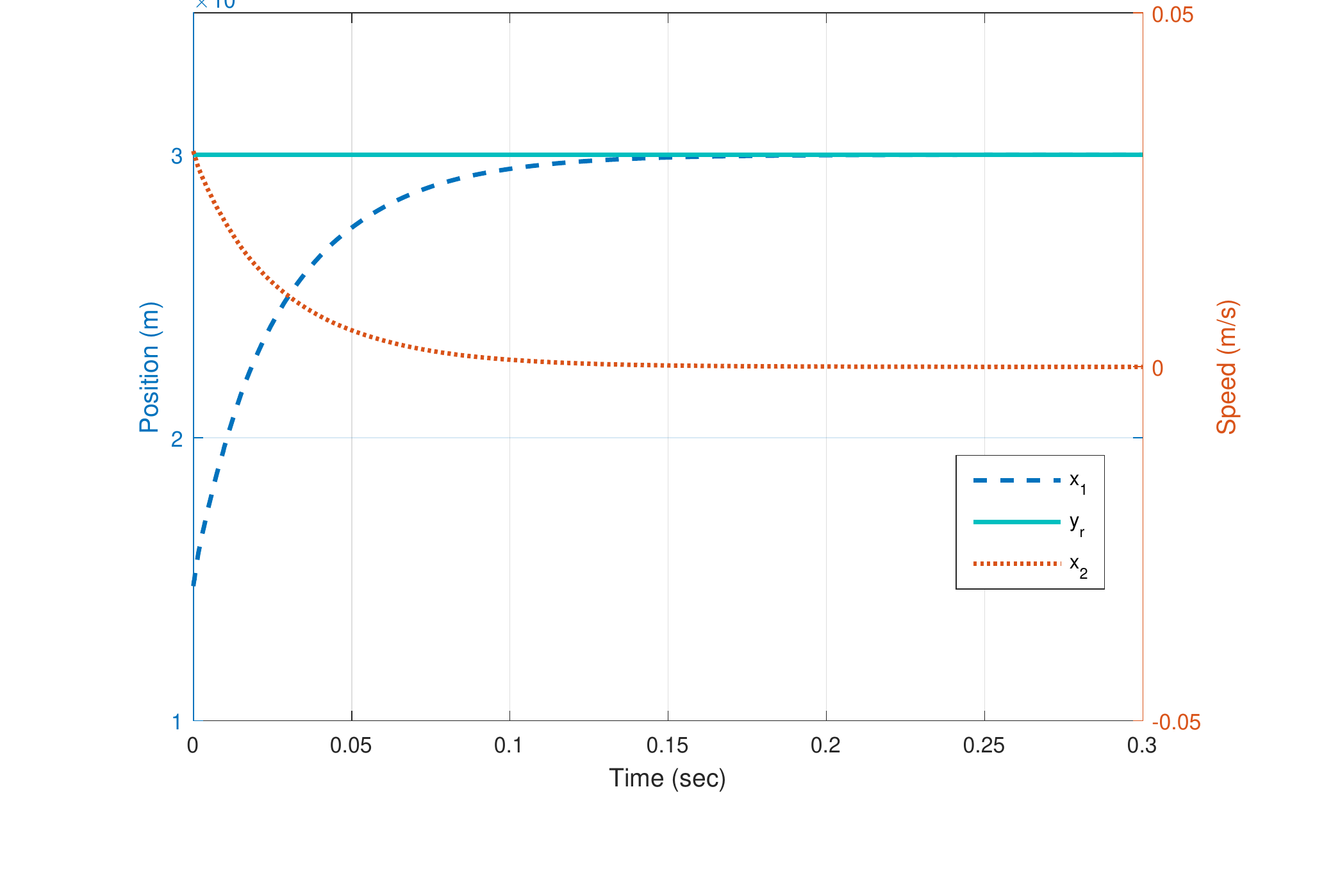} 
\end{center}
\caption{Evolution of the position tracking $x_1$ and velocity $x_2$}
\label{fig-resultats_x2}
\end{figure}
\begin{figure}[h!]
\begin{center}
\includegraphics[trim={2mm 83mm 8mm 75mm}, clip,,width=6cm]{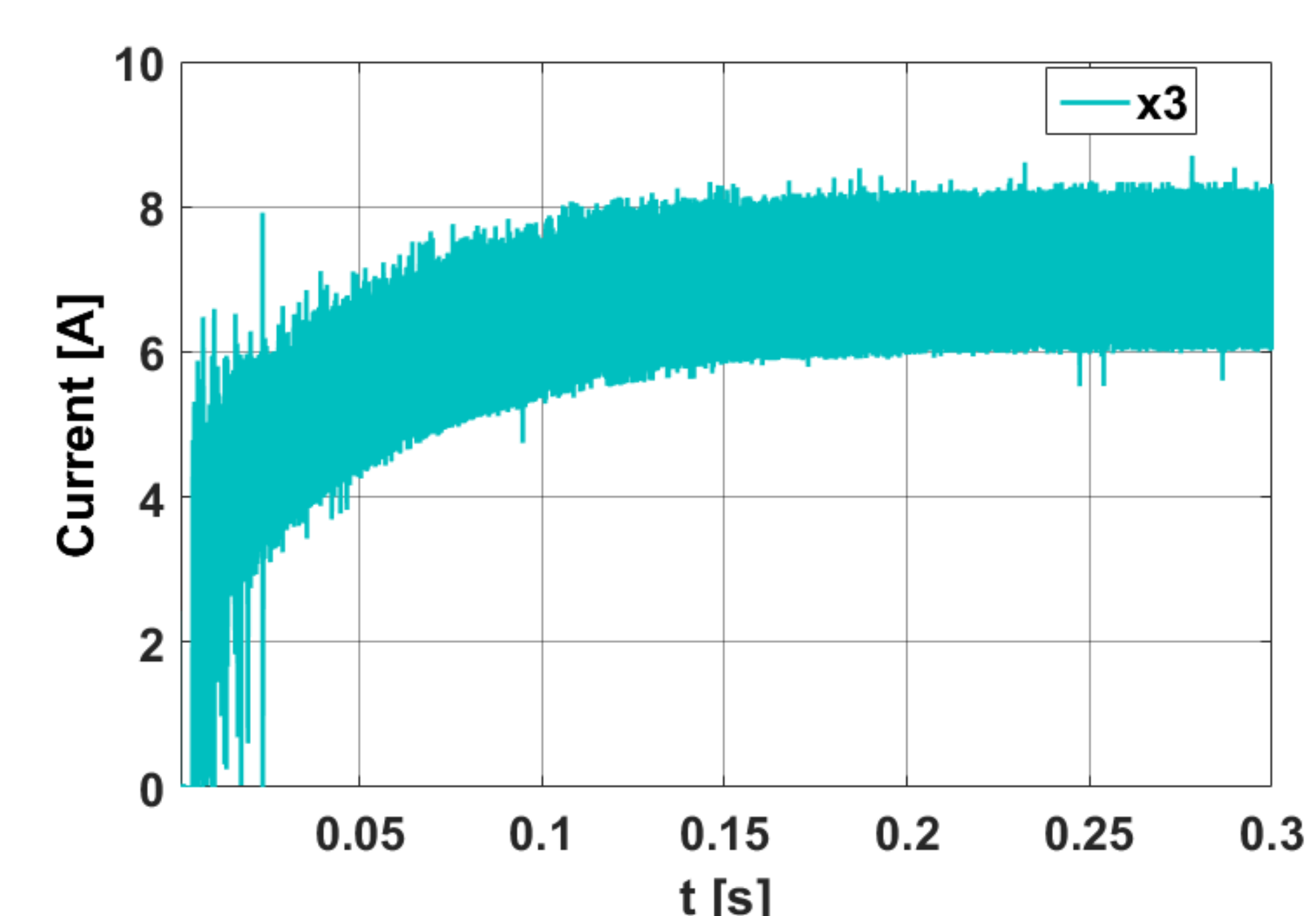} 
\end{center}
\caption{Evolution of the coil current $x_3$}
\label{fig-resultats_x3}
\end{figure}
\begin{figure}[h!]
\begin{center}
\includegraphics[trim={2mm 83mm 8mm 75mm}, clip,,width=6cm]{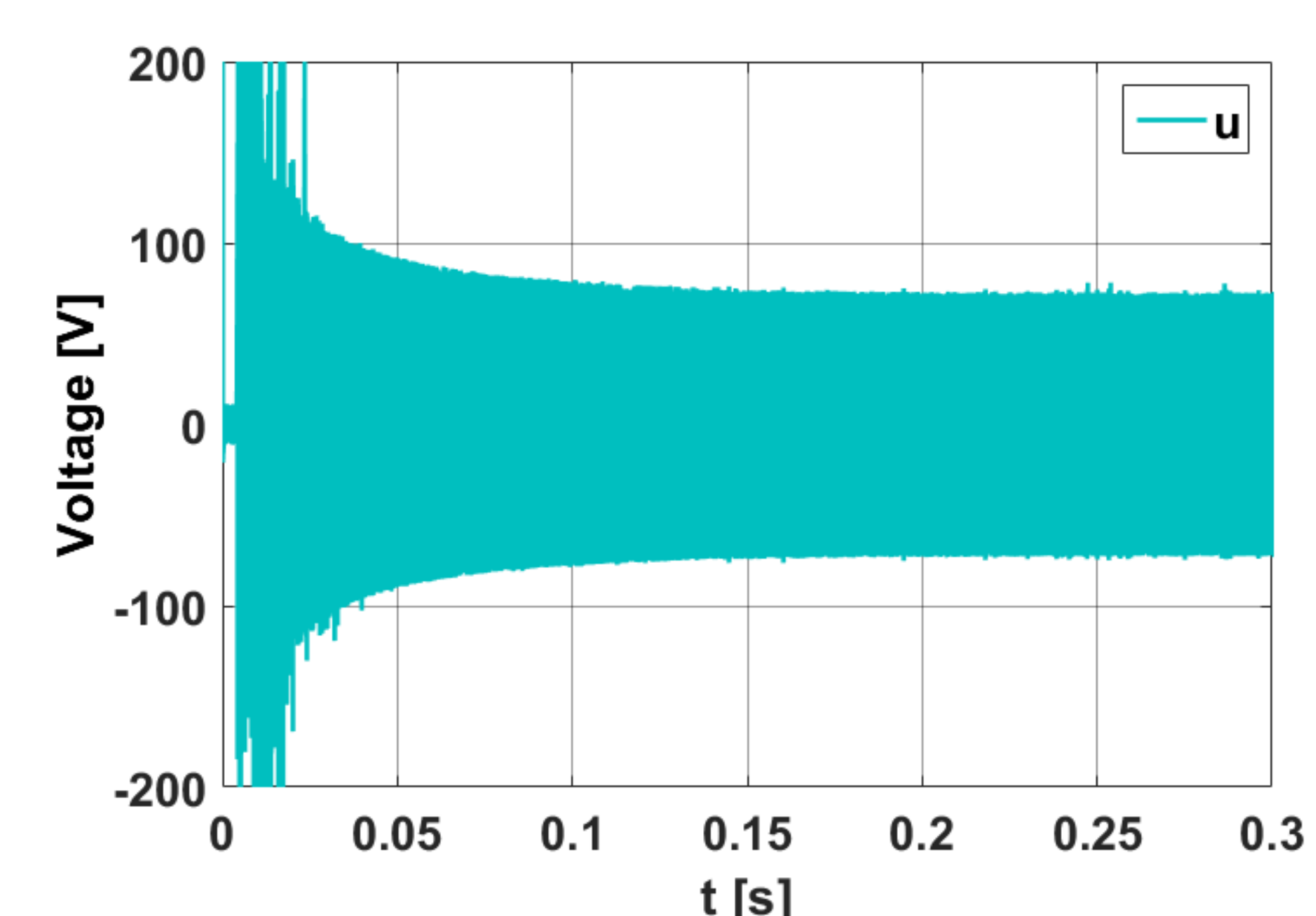} 
\end{center}
\caption{Evolution of the voltage input $u$}
\label{fig-resultats_u}
\end{figure}

The system position follows the reference $y_r$ with a good settling time ($ \approx 0.2s$) and no overshoot. There is no chattering on the position and the speed due to the pure integrator in the system. However, a chattering phenomenon appears on the current due to the discontinuous control $u$. Notice that from a practical point of view, this high frequency switching is not an issue since the implemented control is based on switching transistors. The coil current magnitude is standard as common EMA needs values of a few milliampere to a few Ampere. Notice that the control signal $u$ has also standard values. In this case it can be easily generated by a $150V$ switching power supply.

%\begin{figure}[!h]
%\begin{center}
%\includegraphics[trim={12mm 95mm 20mm 100mm}, clip,width=0.48\textwidth]{} 
%\end{center}
%\caption{Convergence of ($z_1,z_2,S$) to the error disc for some initials conditions}
%\label{fig-resultats_9mm}
%\end{figure}

The Figure \ref{fig-resultats_z1z2} shows the error due to the uncertainty on the model for different initial values. As mentioned in Theorem 3, it is impossible to compensate the constant term $ \displaystyle -\frac{K}{m}y_r$ so the system converges around the reference but a small error remains. However, we may consider this algorithm as robust since the error is minimized. $S$ converges to $0$ and there is less than $0.1$\% error between $x_1$ and $y_r$.
\vspace{-0.5cm}
\begin{figure}[!h]
\begin{center}
\includegraphics[trim={12mm 95mm 15mm 80mm}, clip,width=0.48\textwidth]{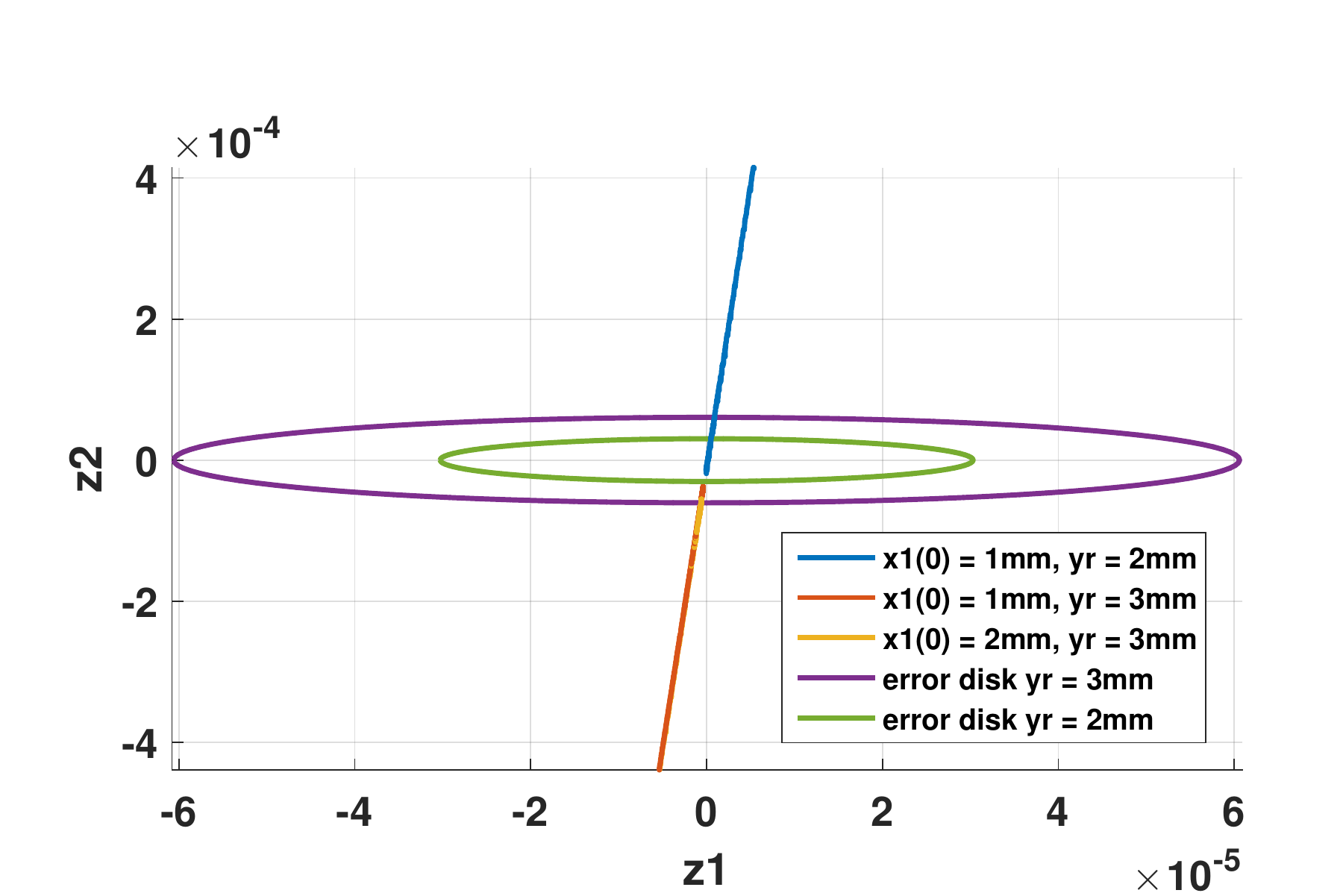} 
\end{center}
\caption{Zoom on the tracking error: Plane $z_1$ - $z_2$}
\label{fig-resultats_z1z2}
\end{figure}

%%%%%%%%%%%%%%%%%%%%%%%%%%%%%%%%%%%%%%%%%%%%%%%%%%%%%%%%%%%%%%%%%%%%%%%%%%%%%%%%
%%%%%%%%%%%%%%%%%%%%%%%%%%%%%%%%%%%%%%%%%%%%%%%%%%%%%%%%%%%%%%%%%%%%%%%%%%%%%%%%
%%%%%%%%%%%%%%%%%%%%%%%%%%%%%%%%%%%%%%%%%%%%%%%%%%%%%%%%%%%%%%%%%%%%%%%%%%%%%%%%
\section{CONCLUSION}
%%%%%%%%%%%%%%%%%%%%%%%%%%%%%%%%%%%%%%%%%%%%%%%%%%%%%%%%%%%%%%%%%%%%%%%%%%%%%%%%
%%%%%%%%%%%%%%%%%%%%%%%%%%%%%%%%%%%%%%%%%%%%%%%%%%%%%%%%%%%%%%%%%%%%%%%%%%%%%%%%
%%%%%%%%%%%%%%%%%%%%%%%%%%%%%%%%%%%%%%%%%%%%%%%%%%%%%%%%%%%%%%%%%%%%%%%%%%%%%%%%
This paper focuses on a robust control law design for an uncertain model of EMA which takes into account the flux fringing. The proposed control law relies on a combined backstepping and sliding mode control and ensures that the states are ultimately bounded within a disc centered around the reference to be tracked.

Future work consists in, on one hand validating the control law on a testbed, on the other hand, including LMI optimization algorithm in order to minimize the set in which the states converge. In addition, a future work should include the design of a control law taking also into account the magnetic saturation. 

%This control law will be tested using an EMA where a real measurement will validate the presented model and show the perspective of this type of control. 

\section*{ACKNOWLEDGMENT}
The Authors would like to thank Bruno Vieille from CNES and Fran\c{c}ois Dugu\'e from CSTM for the grants (CNES-CSTM contract number 170890) that partly supports this activity.

%\begin{thebibliography}{99}
%%
%%\bibitem{c1} A FAIRE
%
%
%\end{thebibliography}

\bibliographystyle{unsrt}
\bibliography{bib}

\end{document}